\documentclass[12pt]{article}
\usepackage{jheppub}
\usepackage{graphicx, amsmath, amsthm, amssymb}

\theoremstyle{plain} 

\newtheorem{theorem}{Theorem}[section]
\newtheorem{lemma}[theorem]{Lemma}
\newtheorem{claim}[theorem]{Claim}

\usepackage[
    backend=biber,       
    style=alphabetic,       
    sorting=none,
    maxbibnames=99       
]{biblatex}
\usepackage{mathtools}
\DeclarePairedDelimiter{\abs}{\lvert}{\rvert}
\DeclarePairedDelimiter{\norm}{\lVert}{\rVert}
\DeclarePairedDelimiter{\parens}{\lparen}{\rparen}
\DeclarePairedDelimiter{\bracks}{\lbrack}{\rbrack}
\DeclarePairedDelimiter{\innerprod}{\langle}{\rangle}
\newcommand{\ZZ}{\mathbb{Z}}
\newcommand{\CC}{\mathbb{C}}
\newcommand{\PP}{\mathbb{P}}
\DeclareMathOperator{\PGL}{PGL}
\DeclareMathOperator{\SU}{SU}
\DeclareMathOperator{\Aut}{Aut}
\DeclareMathOperator{\Isom}{Isom}
\DeclareMathOperator{\Ric}{Ric}
\DeclareMathOperator{\im}{im}

\title{The Sharp Edges of Calabi--Yau Manifolds:\\ Designing Symmetric Models for Ricci-flat Metrics}

\author{Viktor Mirjani\'c and Challenger Mishra}

\affiliation{
Department of Computer Science and Technology, \\
William Gates Building, 15 JJ Thomson Avenue, \\
Cambridge CB3 0FD}
\emailAdd{vvm22@cam.ac.uk}
\emailAdd{cm2099@cam.ac.uk}

\abstract{Computing Ricci-flat metrics on Calabi--Yau manifolds is challenging since no closed-form solutions are known. However, these computations are needed in order to make physical predictions in heterotic string theory, such as the masses of quarks and Yukawa couplings. In this manuscript, we present an overview of relevant literature for learning about Calabi--Yau manifolds, as ML researchers often face a steep learning curve when entering the field. Furthermore, we survey the impact of the manifold's symmetries on machine learning approximations to these flat metrics. We also characterise the isometries of Ricci-flat metrics, a result frequently omitted or used without proof. Then, we address symmetry breaking in point sampling and introduce a novel formula for computing volume ratios on general CICY manifolds. We conclude by presenting a new symmetry-aware model built using graph neural networks that avoids pathological behaviour witnessed in some other models.
}

\begin{document}

\maketitle

\section{Introduction}

One of the most promising approaches that connect string theory to four-dimensional physics is to posit that the extra dimensions of space live on a manifold that is then compactified to recreate the known universe. Thus, the phenomenological study of string theory is directly tied to differential geometry of complex manifolds. 

The main bottleneck in computing four-dimensional physics is the non-constructive nature of the Calabi--Yau theorem, the core theorem that governs the underlying geometry. This is where AI for science can offer a path forward by integrating high-precision computation with existing theory.

In this paper we investigate the use of graph neural networks (GNNs) for computing Ricci-flat metrics on Calabi--Yau manifolds. Along the way, we systematically address various subtleties, or the ``sharp edges'', in the machine learning pipeline. Specifically, we focus on symmetries of the manifold, ways in which they can be incorporated into the model, and situations where they need to be broken.

In Section \ref{sec:2}, we will summarise relevant literature that will help an ML researcher joining the field. Then, in Sections \ref{sec:3} and \ref{sec:4} we will discuss issues regarding symmetries and data generation that might arise when deploying these models. We will also give a novel formula for a volume ratio on complete intersection Calabi--Yau manifolds (CICYs) that is of theoretical interest. Finally, in Section \ref{sec:5} we introduce a novel architecture for metric learning based on GNNs, and comment on pathological scenarios where minimising training loss makes the metric diverge.

\section{String Literature Landscape}\label{sec:2}

A special class of manifolds of interest are K\"ahler manifolds, characterised by the interplay of their three structures: complex, Riemannian, and symplectic. The first means that these manifolds are not just real, but the coordinates have a well-defined complex structure. The second means that the manifold is equipped with a metric, giving it a notion of distance, angles, and volumes. Finally, symplectic structure ties everything to cohomology, which in turn introduces various topological structures. For example, all K\"ahler manifolds have a well-defined volume form $\Omega$, and thus a total non-zero volume.

A particular subclass of K\"ahler manifolds that yields the vacuum Einstein field equations when compactified are Calabi--Yau manifolds. Due to their structure, they can be defined in multiple equivalent ways. They are manifolds with a \emph{holomorphic} volume form $\Omega$, and with holonomy $\SU(n)$. When $n=3$, large datasets of Calabi--Yau manifolds exist, in the form of 7890 CICYs \cite{candelas_complete_1988}, or the Kreuzer--Skarke list \cite{kreuzer_complete_2000}. 

Perhaps most relevantly for machine learning, Calabi--Yau manifolds possess a Ricci-flat metric by Yau's theorem. Because no explicit construction exists, the most typical application of ML is to numerically approximate this flat metric. To this end, multiple packages, such as \emph{cymetric} \cite{larfors_learning_2021} and \emph{cymyc} \cite{berglund_cymyc_2024}, have been developed. Donaldson's algorithm \cite{douglas_numerical_2008,donaldson_scalar_2001,donaldson_numerical_2005} is an alternative that benefits from provable convergence guarantees, but converges orders of magnitude slower than ML. Yet other approaches utilise energy functionals \cite{headrick_energy_2010} to achieve low loss, but have not been generalised beyond highly symmetric manifolds. However, flat metric learning has been thoroughly explored over the years \cite{ashmore_machine_2020,ashmore_calabiyau_2023,jejjala_neural_2022}, so any research level work will require some knowledge of theory, as well as applying novel techniques such as interpretable ML \cite{mirjanic_symbolic_2025,lee_approximate_2025,constantin_calabi-yau_2026}.

For foundational concepts, \textcite{hatcher_algebraic_2002} is an excellent introduction to topology and cohomology, and \textcite{nakahara_geometry_2003} covers the foundational background for Calabi--Yau manifolds. Next, one can move towards Candelas and de la Ossa's lecture notes on complex manifolds \cite{candelas_lectures_2000}, a standard, albeit terse, reference in this field. 

There are many other helpful manuscripts such as \textcite{anderson_lectures_2023,bouchard_lectures_2007,hubsch_calabi-yau_2024}, and this is by no means an exhaustive list. Looking at code implementations can also be a good way to learn about computations of Yukawa couplings \cite{butbaia_physical_2024}, for example, and many papers offer excellent surveys of specialised topics \cite{ek_calabi-yau_2024,gross_large_2000}.

Thus, many new research directions open up, such as solving Yang--Mills equations in non-Abelian settings \cite{mishra_hermitian_2025}, or learning metrics beyond Calabi--Yau manifolds \cite{anderson_moduli-dependent_2021}, or learning on higher-dimensional spaces \cite{aggarwal_machine_2024}. Nevertheless, the Calabi--Yau manifolds are far from solved, and still offer plenty of challenges.

\section{On Calabi--Yau Symmetries}\label{sec:3}

Knowing the symmetry group of a manifold is very useful both for theory and for numerical experiments. For example, when learning the Ricci-flat metric numerically, we can encode symmetry constraints into the model, giving it inductive biases that simplify the training process. 

In fact, since the role of Calabi--Yau manifolds in string theory is to be compactified in order to reconstruct the Standard Model, studying their symmetries is not just a theoretical exercise or an aid to computation, but a fundamental requirement for their downstream applications \cite{witten_symmetry_1985}.

That being said, symmetries of Calabi--Yau manifolds can refer to slightly different things depending on context. Firstly, they can refer to automorphisms of the manifolds. A classical result is that automorphism groups of smooth hypersurfaces of degree at least three are always finite \cite{matsumura_automorphisms_1963}.

Even if finite, a manifold with a concrete choice of moduli might have a significantly large automorphism group. A Fermat quintic in $\PP^4$, defined by
\begin{equation*}
    \label{eq:fermat}
    Z_0^5 + Z_1^5 + Z_2^5 + Z_3^5 + Z_4^5 = 0,
\end{equation*}
has $\ZZ_2$ conjugation symmetry, $S_5$ permutation symmetry, and a large $\ZZ_5^4$ toric symmetry.\footnote{Note that the toric group is not $\ZZ_5^5$, because the last $\ZZ_5$ is contained in the projective $\CC^*$ symmetry.} Thus its total holomorphic automorphism group is $\Aut(X)= \ZZ_5^4 \rtimes S_5$, with order $\abs{\Aut(X)}=75{,}000$. Including the non-holomorphic $\ZZ_2$ group, the order would be twice as big.

Finding all automorphisms is difficult in general, and in practice we prefer to work with ones that are linear in ambient space coordinates. For the quintic, those are a subset of $\PGL$, and it just so happens that they include all automorphisms on the manifold. In general, however, it could happen that multiple CICY configurations give rise to the same manifold, and that linear symmetries in one representation correspond to non-linear actions in the other.

\subsection{Free Symmetries}

Some of the automorphisms will be free, while others will have fixed points on the manifold. This distinction is important for phenomenology, because one can quotient by the free group $\Gamma \le \Aut(X)$ to obtain a new smooth Calabi--Yau. Crucially, this new manifold will have different Hodge numbers than the original, making this a model way to obtain new Calabi--Yau manifolds from the previous ones \cite{candelas_vacuum_1985}.

For this reason, the free groups of 7890 CICYs have been exhaustively classified \cite{braun_free_2011}, and similar attempts have been made for the Kreuzer--Skarke list \cite{braun_discrete_2018}.

There is a delicate balance between breaking symmetries to obtain manifolds with smaller Hodge numbers, and preserving them to influence compactification. For example, a residual $\ZZ_2$ symmetry from the $R$-symmetry breaking is beneficial, because it will manifest as matter parity and ensure stability of the proton \cite{ibanez_discrete_1992}. On the other hand, restricting to symmetric manifolds will constrain the moduli space and limit the number of parameters that can be fine-tuned to fit the Standard Model.

A classification of all discrete symmetries is still not complete. Approaches that worked for free symmetries relied on limiting the group order, and this is hard to translate to the non-free case, where the groups can be much larger. Therefore, only special cases such as non-free linear symmetries have been classified \cite{lukas_discrete_2020}.

Furthermore, restricting the moduli space of Calabi--Yau manifolds can enlarge the symmetry group. A general hypersurface in $\PP^4$ does not have free (or any) automorphisms, but a manifold with concrete moduli can have free group actions up to $\ZZ_5 \times \ZZ_5$. In particular, this is the free automorphism group of the Fermat quintic. No larger free automorphism group can exist because its order must divide the Euler characteristic $\chi=-200=-2^3 5^2$, and free actions cannot have order $2$ because they would descend from the permutation group $S_5$ and will necessarily have fixed points.

Quotienting the manifold by its free group $\Gamma$ creates a new smooth manifold, and in the case of highly symmetric quotients the symmetries are often preserved \cite{mishra_calabi-yau_2017}. For example, the symmetries of quintic quotients are classified, and the largest symmetry group the resulting manifolds can have is $\mathrm{Dic}_5$, a dicyclic group of order $20$ \cite{candelas_highly_2018}. Such analysis could be extended to CICYs. While this has not yet been done, discovering new families of highly symmetric CICY quotients could have interesting implications for phenomenology.

\subsection{Flat Metric Isometries vs Manifold Automorphisms}

The Calabi--Yau manifolds are distinguished by having a unique Ricci-flat metric (in each K\"ahler class) \cite{yau_ricci_1978}. These flat metrics $g_\mathrm{flat}$ have isometry groups $\Isom\parens*{X,g_\mathrm{flat}}$. Let us focus on the holomorphic subgroup $\Isom_\mathrm{hol}\parens*{X, g_\mathrm{flat}}$ for now. Isometry groups are clearly related to the automorphisms of the underlying manifold, and one has
\begin{equation*}
    \Isom_\mathrm{hol}\parens*{X, g_\mathrm{flat}} \le \Aut(X)
\end{equation*}
by definition.

We can recall that quotienting by free groups $\Gamma$ will create a smooth Calabi--Yau, which itself has a unique Ricci-flat metric. Since its pullback must coincide with the unique Ricci-flat metric on $X$, and since the metric is naturally $\Gamma$-invariant, we conclude that elements of $\Gamma$ are isometries of $g_\mathrm{flat}$. Therefore, we arrive at a lower bound
\begin{equation*}
    \Gamma \le \Isom_\mathrm{hol}\parens*{X, g_\mathrm{flat}}.
\end{equation*}
It is not obvious from this approach whether a better bound can exist. Certainly, the current approach cannot produce a stronger lower bound.

Nevertheless, the bound can indeed be significantly sharpened, as we will show. We begin with a quote from Donaldson:
\begin{quote}
    A compact Riemann surface admits a metric of constant curvature,
unique up to the action of the holomorphic automorphisms of the
surface \cite{donaldson_scalar_2001}.
\end{quote}
That is, for \emph{surfaces}, absolute uniqueness of the constant curvature metric would imply that all automorphisms are its isometries. This intuition translates to our case via Yau's theorem, where if an automorphism preserves K\"ahler class it must immediately be an isometry of the flat metric by uniqueness.

Formally, let $\omega$ be the K\"ahler form associated to $g_\mathrm{flat}$, and let $\Aut(X,[\omega]) \le \Aut(X)$ be the subgroup that preserves K\"ahler class $\bracks{\omega}$. Furthermore, let $f\in\Aut\parens*{X,\bracks*{\omega}}$. Since $\Ric$ is a \emph{natural} tensor it commutes with pullbacks \cite{lee_curvature_2018}, and we have 
\begin{equation*}
    \Ric\parens*{f^* g_\mathrm{flat}}=f^* \Ric\parens*{g_\mathrm{flat}}=f^*(0)=0.
\end{equation*}
Since $f$ preserves K\"ahler class we also have $\bracks*{f^* \omega}=\bracks*{\omega}$. Hence, $g_\mathrm{flat}$ and $f^* g_\mathrm{flat}$ are two flat metrics in the same class. Therefore, they must be equal to each other by uniqueness, so by definition $f\in\Isom_\mathrm{hol}\parens*{X, g_\mathrm{flat}}$. In the other direction, every isometry trivially preserves K\"ahler forms. Therefore, we have 
\begin{equation*}
\Isom_\mathrm{hol}\parens*{X, g_\mathrm{flat}} = \Aut\parens*{X, \bracks*{\omega}}.    
\end{equation*}
 Crucially, and this is easily missed, this argument does not require the group action to be free. Furthermore, we treated $\omega$ and $g_\mathrm{flat}$ as global objects, and thus we never had to introduce coordinate patches on $X$ or look at how they transform on their overlaps. Since they are well-defined globally, they must behave as expected in any concrete representation.
 
 To completely describe the isometry group and account for anti-holomorphic actions that might or might not exist, we can write
\begin{equation}\label{eq:isom_ses}
   \Isom\parens*{X, g_\mathrm{flat}}/\Aut\parens*{X, \bracks*{\omega}} \cong G,
\end{equation}
where $G=\ZZ_2$ if $X$ has anti-holomorphic symmetries, and $G=1$ otherwise. That is, the holomorphic automorphisms form a normal subgroup of index at most two inside the full isometry group.

This holds because on a Calabi--Yau with strict holonomy $\SU(n)$, every isometry either commutes with complex structure $J$ (holomorphic), or sends it to $-J$ (anti-holomorphic), and there are no other possibilities. Let $\phi\colon \Isom\parens*{X, g_\mathrm{flat}} \to \ZZ_2$ send holomorphic isometries to $0$ and anti-holomorphic ones to $1$. By the First Isomorphism Theorem we have
\begin{equation*}
    \Isom\parens*{X, g_\mathrm{flat}}/\ker(\phi) \cong \im(\phi)
\end{equation*}
Since we have $\ker(\phi) = \Isom_\mathrm{hol}\parens*{X, g_\mathrm{flat}} =  \Aut\parens*{X, \bracks*{\omega}}$, and $\im(\phi)=G$, we are immediately done. If the holonomy is not exactly $\SU(n)$ but a proper subgroup, $G$ can have a much more interesting structure. For example, on K3 surfaces, there are additional isometries arising from Hyperk\"ahler transformations, and flat tori have an enormous isometry group.

\subsection{Isometries on Fermat Calabi--Yau}

To summarise, we revisit the Fermat quintic, our model example. 

Firstly, we can improve upon equation \eqref{eq:isom_ses} because there is a natural $\ZZ_2$ action given by conjugation, and write $\Isom\parens*{X, g_\mathrm{flat}} = \Aut\parens*{X, \bracks*{\omega}} \rtimes \ZZ_2$. It is not possible to do this always, though.

Secondly, when learning the flat metric we typically parameterise it as $g_\mathrm{flat} \approx \iota^* g_\mathrm{ref} + \partial\overline{\partial} \phi_\mathrm{NN}$, so we are actually interested in the symmetries of $\phi$. To ensure that $\phi$ has a large symmetry group, we should pick a highly symmetric $g_\mathrm{ref}$. Specifically, if $g_\mathrm{ref}$ has more symmetries than $g_\mathrm{flat}$, we can infer the symmetries of $\phi$ from $\Isom\parens*{X, g_\mathrm{flat}}$. For this reason, using the pullbacked Fubini--Study metric as a reference is a good choice.

To see how to obtain the symmetries of $\phi$ in practice, let us assume $\Isom\parens*{X, g_\mathrm{flat}} \le \Isom\parens*{\PP^n, g_\mathrm{ref}}$. Then, we can average over $\Aut$ using the Reynolds operator to obtain
\begin{equation*}
    \mathcal{R}_{\Isom\parens*{X, g_\mathrm{flat}}}\parens*{g_\mathrm{flat}} = \mathcal{R}_{\Isom\parens*{X, g_\mathrm{flat}}}\parens*{g_\mathrm{ref} + \partial\overline{\partial} \phi}.
\end{equation*}
After using the fact that $\mathcal{R}$ commutes with everything here, and that the two metrics are invariant under the group, we obtain
\begin{equation*}
    g_\mathrm{flat} =g_\mathrm{ref} + \partial\overline{\partial}  \mathcal{R}_{\Isom\parens*{X, g_\mathrm{flat}}}\parens*{\phi}.
\end{equation*}
Therefore we can apply symmetries of $g_\mathrm{flat}$ when learning $\phi$. However, this does not mean that one ought to sum over all $\abs{\Isom\parens*{X, g_\mathrm{flat}}}=75{,}000$ transformations in each forward pass, as that would be computationally infeasible. Rather, the lesson is that these symmetries should be built into the neural architecture that learns $\phi$.

\section{Point Sampling}\label{sec:4}
Once we have fixed a Calabi--Yau variety by specifying its defining polynomial(s), we can begin to learn differential forms and tensors such as the Ricci-flat metric. However, doing so requires working with points on the manifold, so we require a way to sample them.

Ideally, we would like to sample points uniformly with respect to the intrinsic volume form $\Omega$ of the Calabi--Yau, but this is not possible. One solution is to use rejection sampling to get points close to the manifold, but this is both inaccurate and expensive. Instead, the commonly adopted solution in practice is to sample points in the ambient space, and then use a theorem of Shiffman and Zelditch \cite{shiffman_distribution_1999} to find intersections with the Calabi--Yau.

This results in a biased sample on the manifold, but with a reference distribution that we can control. Therefore, we are able to compute the integration weights
\begin{equation*}
w = \frac{\mathrm{dVol}_\Omega}{\mathrm{dVol}_\mathrm{ref}} = \frac{\det g_\mathrm{flat}}{\det g_\mathrm{ref}},
\end{equation*}
which are necessary for calculating global objects on the manifold, such as integrals relating to the flat metric. Here, the ratio of volume elements is exactly the ratio between determinants of the Ricci-flat and reference metrics. The latter is computable because we have control over it, but crucially, $\det g_\mathrm{flat}$ is also computable even if $g_\mathrm{flat}$ itself is not. This is because we can obtain it directly from the Monge--Amp\`ere equation $\det g_\mathrm{flat} = \kappa \Omega \wedge \overline\Omega$, where $\Omega$ is again the volume form, and $\kappa$ is a real constant. The right-hand side is computable because $\Omega$ can be derived directly from the defining polynomial using the Poincar\'e residue, and $\kappa$ can be estimated numerically.

\subsection{Integration Weights on Hypersurfaces in \texorpdfstring{$\PP^n$}{Complex Projective Plane}}
To see how this works in practice, let us turn to Calabi--Yau hypersurfaces with inclusion map $\iota\colon X \hookrightarrow \PP^n$. We denote the homogeneous coordinates with $Z_i$, and local coordinates on a patch of $\PP^n$ where $Z_0=1$ with $z_i$. Here, the ambient metric is the Fubini--Study metric
\begin{equation*}
g_{ij} = \delta_{ij}/\parens[\big]{1 + \norm{z}^2} - \overline{z}_iz_j/\parens[\big]{1 + \norm{z}^2}{}^2,    
\end{equation*}
and we can naturally sample according to it by constructing $\PP^n$ as a quotient $S^{2n+1}/S^1$ and pulling back the round metric from the sphere.

As a concrete example, recall the Fermat hypersurface given by
\begin{equation*}
    Q\colon Z_0^{n+1}+\dots+Z_n^{n+1}=0.
\end{equation*}
Fixing a patch, e.g.~$Z_0=1$, we obtain the defining equation
\begin{equation*}
    1+z_1^{n+1}+\dots+z_n^{n+1}=0.
\end{equation*}
To obtain local coordinates on $X$ we can eliminate one coordinate from $\PP^n$, e.g.~$z_n$. Next, we compute the volume form
\begin{equation*}
    \Omega = \frac{1}{\abs{z_n}^{2n}}.
\end{equation*}
The pullbacked Fubini--Study metric $\iota^* g_\mathrm{FS}$ equals $J g_\mathrm{FS} J^\dag$, where $J$ is the Jacobian matrix resulting from elimination of $z_n$. Finally, we have everything needed to compute the integration weights $w = \det g_\mathrm{flat} / \det \iota^* g_\mathrm{FS}$.

While the above derivation is mathematically correct, it is arguably unsatisfactory. Although the weights $w$ are well-defined globally and are coordinate-independent, this is not immediately obvious from their definition, and computing them required choosing a local patch and introducing a basis. It is natural to ask whether that was necessary, or if a more direct computation is available.

This is addressed for hypersurfaces in $\PP^n$ \cite{mirjanic_symbolic_2025}, where one has
\begin{equation*}
    \frac{\det g_\mathrm{flat}}{\det \iota^* g_\mathrm{FS}} = \frac{\norm{Z}^{2n}}{\norm{\nabla Q}^2}
\end{equation*}
for any defining polynomial $Q$ in $\PP^n$, where $\nabla Q$ is the Euclidean gradient with respect to $Z$, and $\norm{-}$ is the usual vector norm. This holds up to a constant scaling factor that is omitted here due to different normalising conventions, and because weights will be normalised to sum to one anyway.

The advantage of having this closed formula for $w$ is that it is now obvious that the weights are well-defined scalars that are invariant under homogeneous transformations $Z\mapsto \lambda Z$. Furthermore, the new formula is \emph{manifestly global}, unlike the original definition. This is useful for pedagogical purposes, and simplifies reasoning about the properties of the weights.

Consider the aforementioned Fermat hypersurface. There, the weights can be directly computed to be
\begin{equation*}
    w = \frac{\parens*{\abs{Z_0}^2 + \dots + \abs{Z_n}^2}^n}{\abs{Z_0}^{2n}+\dots+\abs{Z_n}^{2n}}.
\end{equation*}
Comparing invariants of $w$ to isometries of the Fermat Calabi--Yau, we easily observe that $w$ is invariant to both permutations and toric actions. In fact, $w$ has a larger symmetry group since it is invariant under arbitrary phase shifts. Meanwhile, this $U(1)^{n+1}$ symmetry is not a symmetry of the underlying manifold.

Is there a deeper meaning in $w$ having more symmetries than the manifold itself? One proposition \cite{mirjanic_symbolic_2025} is that this enlarged symmetry group could have opportunities to manifest in other differential forms on the manifold. This could also be related to mirror symmetry \cite{strominger_mirror_1996}, since resolving conifold singularities on the quotient $X/\ZZ_{n+1}^{n+1}$ will result in a Calabi--Yau that is the mirror of the one defined by $\prod Z_i = 0$, and the latter does have the desired $U(1)^{n+1}$ symmetry group. However, this is all speculative, and the formula remains a curiosity for now.

\subsection{Integration Weights on Complete Intersections}
A natural follow-up question is whether a similar formula exists for general CICYs. Here we present a novel, manifestly global formula for this volume ratio.
\begin{theorem}
\label{thm:closed_form_w_cicy}
    Consider a Complete Intersection Calabi--Yau (CICY) manifold defined by $m$ multihomogeneous polynomials $Q_1, \dots, Q_m$ embedded into ambient space $\mathcal{A} = \PP^{n_1} \times \dots \times \PP^{n_k}$ with K\"ahler moduli $t_1,\dots, t_k$. Then,
    \begin{equation*}
        \frac{\mathrm{dVol_{flat}}}{\mathrm{dVol_{FS}}} = \frac{\det g_\mathrm{flat}}{\det \iota^* g_\mathrm{FS}} = \frac{\prod_{r=1}^k t_r^{-n_r} \norm*{Z^{(r)}}^{2(n_r+1)}}{\det_{ij}\parens*{ \sum_{r=1}^k t_r^{-1} \norm{Z^{(r)}}^2 \innerprod*{\nabla^{(r)} Q_i, \nabla^{(r)} Q_j}}}
    \end{equation*}
    where $Z^{(r)}$ represents the vector of homogeneous coordinates for the $r$-th projective space $\PP^{n_r}$, and $\nabla^{(r)} Q_i$ are standard Euclidean gradients with respect to the coordinates $Z^{(r)}$.
\end{theorem}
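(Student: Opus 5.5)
Both sides of the formula are globally defined scalar functions on $X$: the numerator of the right-hand side is invariant under each rescaling $Z^{(r)}\mapsto\lambda_r Z^{(r)}$ with weight matching the denominator, because $Q_i$ has some multidegree $(d_{i1},\dots,d_{ik})$. So I would verify the identity pointwise in a convenient frame. By Monge--Amp\`ere, $\det g_\mathrm{flat}$ is proportional to $\Omega\wedge\overline\Omega$, with $\Omega$ the Poincar\'e residue of the holomorphic top form on $\mathcal A$ divided by $Q_1\cdots Q_m$. The statement therefore reduces to the linear-algebra identity
\begin{equation*}
\frac{\Omega\wedge\overline\Omega}{\mathrm{dVol}_{\iota^*g_\mathrm{FS}}}=\frac{\prod_r t_r^{-n_r}\norm{Z^{(r)}}^{2(n_r+1)}}{\det_{ij}\bigl(\sum_r t_r^{-1}\norm{Z^{(r)}}^2\innerprod{\nabla^{(r)}Q_i,\nabla^{(r)}Q_j}\bigr)}
\end{equation*}
up to an overall constant, which is irrelevant since weights are normalised.

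\textbf{Step 1 (frame).} Fix a point $p\in X$. Using the $U(n_r+1)$ isometries of each factor, rotate so that $Z^{(r)}=(\norm{Z^{(r)}},0,\dots,0)$, and take affine coordinates $z^{(r)}$ centred at $p$. At $p$ the Fubini--Study metric $\sum_r t_r\,\omega_{FS}^{(r)}$ is diagonal: it equals $t_r/\norm{Z^{(r)}}^{2}$ times the identity in homogeneous-scaled coordinates. Equivalently, in normalised affine coordinates it is $t_r\delta$ on block $r$.

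\textbf{Step 2 (residue volume form against the ambient metric).} Write $\Omega_{\mathcal A}$ for the standard section $\prod_r\bigl(\sum_a (-1)^aZ^{(r)}_a\,dZ^{(r)}_0\wedge\cdots\widehat{dZ^{(r)}_a}\cdots\bigr)$ divided by $\prod_i Q_i$. For the Poincar\'e residue of a complete intersection, the standard formula gives, for any Hermitian metric $h$ on $T\mathcal A|_X$,
\begin{equation*}
\frac{\abs{\mathrm{Res}\,\Omega_{\mathcal A}}^2_{\iota^*h}}{1}=\frac{\abs{\mathrm{d}z\text{-top form}}^2_h}{\det_{ij} h^{-1}(\partial Q_i,\overline{\partial Q_j})}.
\end{equation*}
This is the Gram-determinant identity for the normal bundle: $\mathrm{dVol}_{\mathcal A}=\mathrm{dVol}_X\wedge\mathrm{dVol}_N$, and $N^*$ is spanned by $dQ_1,\dots,dQ_m$ with Gram matrix $\innerprod{dQ_i,dQ_j}_{h^{-1}}$.

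\textbf{Step 3 (evaluate in the frame).} The inverse metric on block $r$ is $t_r^{-1}\norm{Z^{(r)}}^2$ times the identity on the directions orthogonal to $Z^{(r)}$. Euler's relation gives $\innerprod{\nabla^{(r)}Q_i,Z^{(r)}}=d_{ir}Q_i=0$ on $X$, so the radial component of the Euclidean gradient vanishes and restricting to orthogonal directions loses nothing. Hence the Gram matrix becomes $\sum_r t_r^{-1}\norm{Z^{(r)}}^2\innerprod{\nabla^{(r)}Q_i,\nabla^{(r)}Q_j}$, which is exactly the denominator. The numerator comes from the pointwise norm of the ambient top form. At $p$ the affine coordinates are $z_a=Z_a/Z_0$, so the Jacobian contributes $\abs{Z^{(r)}_0}^{2(n_r+1)}=\norm{Z^{(r)}}^{2(n_r+1)}$ per factor. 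Each metric block contributes $t_r^{-n_r}$ from the inverse of $\det(t_r\delta)$.

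\textbf{Main obstacle.} I expect the hardest part to be bookkeeping in Step 2/3. One must check carefully that the powers of $\norm{Z^{(r)}}$ coming from the homogeneous-to-affine Jacobian, from the FS normalisation, and from converting affine derivatives $\partial Q/\partial z$ to homogeneous gradients combine to exactly $2(n_r+1)$ in the numerator and a single $\norm{Z^{(r)}}^{2}$ inside the Gram matrix. I would first sanity-check the exponents against the hypersurface case $k=m=1$, $t=1$, where the formula must reduce to $\norm Z^{2n}/\norm{\nabla Q}^2$ from \cite{mirjanic_symbolic_2025}. Finally, I would confirm multihomogeneous invariance as a consistency check, which pins down the result up to constants.
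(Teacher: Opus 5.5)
Your proof is correct, and it reaches the central determinant identity by a different route from the paper.

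\textbf{The paper's route.} The paper works at an arbitrary point of the fixed patch $Z^{(r)}_0=1$ and writes the residue as $\det g_\mathrm{flat}=1/\abs{\det D_y}^2$. It proves $\det\iota^*g_\mathrm{FS}=\det g_\mathrm{FS}\det(Dg_\mathrm{FS}^{-1}D^\dagger)/\abs{\det D_y}^2$ (Claim 1) purely by matrix algebra. It borders $g_\mathrm{FS}$ with the Jacobian $D$, conjugates by a unimodular block-triangular matrix whose first block row contains the pullback Jacobian $J$, and compares two Schur complements. It then evaluates $Dg_\mathrm{FS}^{-1}D^\dagger$ with the full inverse Fubini--Study metric. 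Euler's identity turns the rank-one piece $z_i\bar z_j$ of that inverse into the missing $\partial Q/\partial Z^{(r)}_0$ component (Claim 2).

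\textbf{Your route.} Your Step 2 is the same identity in invariant form, $\abs{\Omega}^2_{\iota^*h}=\abs{\eta}^2_h/\det\innerprod{dQ_i,dQ_j}_{h^{-1}}$, with $\eta$ the ambient top form. You derive it from the $h$-orthogonal splitting $T\mathcal A|_X=TX\oplus N$, the factorisation of the top exterior power, and the fact that $dQ_1,\dots,dQ_m$ form a basis of the conormal space. In Step 3 you evaluate at the centre of a unitarily adapted chart, where the metric is $t_r\delta$. There Euler's relation discards the radial component of the gradient rather than absorbing a rank-one correction.

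\textbf{Comparison.} Both arguments hold for an arbitrary Hermitian ambient metric at the key step, so the gain is in transparency rather than scope. Your version explains why a Gram determinant of conormal covectors appears and why the answer is global. The paper's version is completely explicit at every point of one chart, needs no symmetry reduction, and uses exactly the objects ($D$, $D_y$, $J$) that appear in numerical implementations.

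\textbf{Points to state explicitly in a write-up.} First, both sides are invariant under the unitary change of homogeneous coordinates: the residue picks up $\det U$, which has modulus one, and the gradients transform unitarily. Second, the homogeneity weights of numerator and denominator match only because $\sum_i d_{ir}=n_r+1$. Third, the factor $\norm{Z^{(r)}}^2$ in the Gram matrix is the chain-rule factor from evaluating $Q_i$ on the representative $Z^{(r)}_0(1,z^{(r)})$. This is consistent with the factor $\abs{Z^{(r)}_0}^{2(n_r+1)}$ you assign to $\eta$.
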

There are multiple pedagogical benefits to having an explicit formula for this ratio. Firstly, just as with the hypersurface variant, it is easy to observe that this is a globally well-defined function on the manifold. Furthermore, we can observe that this is a proper generalisation of the original formula, which reduces to it when there is only one polynomial and projective space. We also observe that while the formula also involves taking the determinant, there are fewer polynomials $Q$ than total homogeneous variables ($m < \sum (n_i + 1)$), which makes our formula more efficient. 

While the formula is degree-$0$ homogeneous in the coordinates $Z$, it is also homogeneous in the K\"ahler moduli $t$. Specifically, if $t\mapsto \lambda t$, then the numerator scales like $\lambda^{\sum n_i}$. On the other hand, the denominator computes the determinant of an $m\times m$ matrix, so it scales like $\lambda^m$. Since $\sum n_i-m=\dim X$, we conclude that the volume ratio is homogeneous of degree $\dim X$ in the moduli $t$. This perfectly matches $\det \iota^* g_\mathrm{FS}$ being degree $\dim X$ homogeneous in the moduli, while $\det g_\mathrm{flat}$ is invariant.

CICYs are typically specified by a configuration matrix that keeps track of degrees of polynomials $Q_i$ with respect to each projective space. For example, a general Tian--Yau manifold has the configuration 
\begin{equation*}
\bracks*{
\begin{array}{c|ccc}
\PP^3 & 3 & 0 & 1 \\
\PP^3 & 0 & 3 & 1
\end{array}
},
\end{equation*}
and a concrete manifold can be defined with polynomials
\begin{equation}\label{eq:tian-yau}
  \begin{aligned}
    Q_1&\colon Z_0^3 + Z_1^3 + Z_2^3 + Z_3^3 = 0, \\
    Q_2&\colon W_0^3 + W_1^3 + W_2^3 + W_3^3 = 0, \\
    Q_3&\colon Z_0 W_0 + Z_1 W_1 + Z_2 W_2 + Z_3 W_3 = 0,
\end{aligned}  
\end{equation}

where $Z$ and $W$ are homogeneous coordinates from different $\PP^3$s. We observe that the configuration matrix is not directly encoded into the formula for the volume ratio. Rather, it is present via $\nabla^{(r)} Q_i$, which already includes all information about relationship between polynomials and projective spaces.

This formula may at first seem like an excellent way to simplify computation of point weights. However, the problem is that this ratio no longer equals the integration weights, because the reference measure is no longer the ambient Fubini--Study measure. This happens due to the way the point sampling is implemented using the Shiffman and Zelditch's theorem \cite{shiffman_distribution_1999}, and results in subtle and interesting consequences.

Namely, for each polynomial, a concrete coordinate that will be eliminated has to be chosen. Since there are three polynomials $Q$ and only two projective spaces, one must necessarily break symmetry. Without loss of generality we will eliminate two coordinates from the second projective space and only one from the first. Thus, if $J_1$ and $J_2$ are the K\"ahler forms, the Fubini--Study measure corresponds to the top form $(J_1+J_2)^3$ and is symmetric, while the reference measure chosen by the sampling algorithm would be $J_1\wedge J_1 \wedge J_2$, reflecting the broken symmetry of the ambient space.

To see how this symmetry breaking appears in practice, we look at qualitative properties of sampled points like the histogram of norms of coordinate vectors. In Figure \ref{fig:tian_yau_norms}, we compare the norms of coordinate vectors for first and second $\PP^3$ for two different measures, sampled $d\mathrm{Vol}_{\mathrm{ref}}$, and Ricci-flat $d\mathrm{Vol}_\Omega$. Note that the points are always normalised in such a way that $\max Z_i=1$, so $2\le\norm{Z}^2\le 4$, and similarly for $W$. Furthermore, the plots are normalised so that the total volume equals $1$ in both cases.

\begin{figure}[ht]
    \centering
    \includegraphics{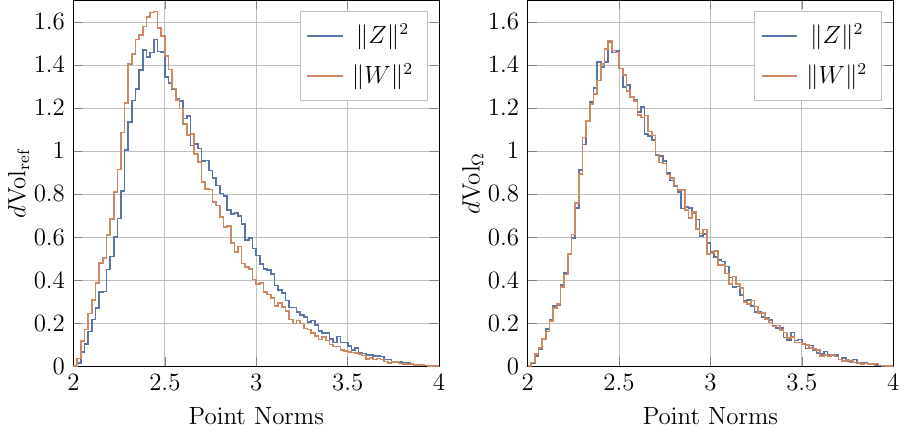}
    \caption{Numerical histogram of point norms on the Tian--Yau manifold (\ref{eq:tian-yau}), with respect to the reference measure (left), and the Ricci-flat measure (right). The latter is symmetric with respect to the two projective spaces, while the former is not.}
    \label{fig:tian_yau_norms}
\end{figure}

We observe that the distributions are different under the sampling measure, which would not be the case if it were symmetric like Fubini--Study. Meanwhile, the two histograms become identical when adjusted for the flat metric, showing that the integration weights properly cancelled out the sampling asymmetry.

This behaviour is somewhat inelegant, and one could ask if there is an inexpensive way to avoid this. However, this is more easily said than done. Since the sampling strategy requires using a concrete top form like $J_1\wedge J_1 \wedge J_2$, in order to make it symmetric we would need to randomly switch between all terms in the expansion of $(J_1 + J_2)^3$, properly accounting for coefficients and K\"ahler moduli. This quickly becomes unwieldy as we move to CICYs made out of more than two projective spaces.

On the other hand, perhaps we could change the sampling strategy itself to make it inherently symmetric. We can recall the Segre embedding \cite{hartshorne_algebraic_1977}
\begin{equation*}
    \PP^{n_1}\times\PP^{n_2}\times\dots\times\PP^{n_k} \hookrightarrow \PP^{(n_1+1)(n_2+1)\dots(n_k+1)-1}
\end{equation*}
and the fact that pullbacks of the Fubini--Study metric under it are itself Fubini--Study metrics. Thus, one could try to work directly in this single projective space instead of the original product. However, this is unfeasible in practice since the number of homogeneous coordinates explodes: $\prod (n_i+1) - 1 \gg \sum n_i$.

Finally, one can ask if it is possible to change the formula to match the reference measure. However, this is not feasible either, and would involve replacing the determinant in the denominator with a complex alternating sum that depends on the exact top form used by the sampler.

In the end, the best thing we can do is to proceed to the proof of Theorem \ref{thm:closed_form_w_cicy}.

\subsection{Proof of Theorem \ref{thm:closed_form_w_cicy} (on CICY Volume Ratios)}
Our strategy will be to avoid computing determinants by hand and instead manipulate them with linear algebra lemmas such as Schur complement. Since we are working in local coordinates, we will use Euler's identity to reintroduce homogeneous coordinates.
\begin{lemma}[Schur Complement]
    Let $M = \begin{pmatrix} A & B \\ C & D \end{pmatrix}$ be a block matrix with $A$ and $D$ square. 
    If $A$ is invertible, then $\det(M) = \det(A) \det(D - C A^{-1} B)$.
    Similarly, if $D$ is invertible, then $\det(M) = \det(D) \det(A - B D^{-1} C)$. Furthermore, there are known formulas for $M^{-1}$ in these cases.
\end{lemma}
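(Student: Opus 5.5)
The plan is to prove the lemma by a block LU (Gaussian elimination) factorisation of $M$. This reduces $\det(M)$ to determinants of block-triangular matrices with identity diagonal blocks, together with one block-diagonal matrix. Write $A$ as $p\times p$ and $D$ as $q\times q$, so that $B$ is $p\times q$ and $C$ is $q\times p$.

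\textbf{Case $A$ invertible.} First verify, by direct block multiplication, the identity
\begin{equation*}
    \begin{pmatrix} A & B \\ C & D \end{pmatrix}
    = \begin{pmatrix} I_p & 0 \\ C A^{-1} & I_q \end{pmatrix}
      \begin{pmatrix} A & 0 \\ 0 & D - C A^{-1} B \end{pmatrix}
      \begin{pmatrix} I_p & A^{-1} B \\ 0 & I_q \end{pmatrix}.
\end{equation*}
Multiplying out the right-hand side, the only non-obvious entry is the lower-right block. It equals $CA^{-1}\cdot A\cdot A^{-1}B + (D - CA^{-1}B) = D$, so the identity holds. Next, use multiplicativity of the determinant. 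The two outer factors are block unitriangular, hence have determinant $1$. The middle factor is block diagonal, hence has determinant $\det(A)\det(D - CA^{-1}B)$. This gives the first formula.

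\textbf{Case $D$ invertible.} Here I use the mirrored factorisation
\begin{equation*}
    M = \begin{pmatrix} I_p & B D^{-1} \\ 0 & I_q \end{pmatrix}
      \begin{pmatrix} A - B D^{-1} C & 0 \\ 0 & D \end{pmatrix}
      \begin{pmatrix} I_p & 0 \\ D^{-1} C & I_q \end{pmatrix},
\end{equation*}
checked in the same way, and read off the second formula.

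\textbf{Inverse of $M$.} For the remark on $M^{-1}$, note that $\det(M) \neq 0$ holds exactly when the relevant Schur complement is invertible. In that situation, invert each factor separately. The inverse of a block unitriangular matrix is obtained by negating its off-diagonal block. The inverse of the block-diagonal factor is obtained by inverting each diagonal block. Multiplying these three inverses in reverse order gives the standard block formula for $M^{-1}$.

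The only step needing any care is the claim that a block-triangular matrix has determinant equal to the product of the determinants of its diagonal blocks. I would justify this via the Leibniz expansion. Take a block lower-triangular matrix with diagonal blocks $X$ ($p\times p$) and $Y$ ($q\times q$), so the upper-right block is zero. Any permutation that sends some index in $\{1,\dots,p\}$ to a column index in $\{p+1,\dots,p+q\}$ picks up an entry from the zero upper-right block and contributes nothing. A bijection that maps $\{1,\dots,p\}$ into itself also preserves the complementary set. So the sum factors as the product of the Leibniz sums for $X$ and for $Y$, which gives $\det(X)\det(Y)$. The block upper-triangular case follows by transposition.

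Everything else in the argument is routine block arithmetic, so I expect no real obstacle.
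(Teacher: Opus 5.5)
The paper gives no proof of this lemma. It quotes it as a standard linear-algebra fact and only uses it in Claim~\ref{claim:1}, so there is no argument of the paper's to compare yours against.

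Your block-LU factorisation is the standard textbook proof, and it is correct. Both factorisations multiply out to $M$, and the outer factors are block unitriangular. Your Leibniz argument for block-triangular determinants is also sound. The one step you leave implicit is that $\mathrm{sgn}(\sigma)$ equals the product of the signs of its restrictions to $\{1,\dots,p\}$ and to the complement. This holds because the two restrictions are commuting permutations with disjoint supports.

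Your derivation of $M^{-1}$ by inverting the three factors is also right. Both of your cases are exactly what Claim~\ref{claim:1} needs:
\begin{itemize}
\item The first application has $A=g_\mathrm{FS}$ invertible and a zero lower-right block, which is your first case.
\item The second application has the invertible block $K$ in the lower-right position, which is your second case. With that block layout the complement is $J g_\mathrm{FS} J^\dagger - L K^{-1} L^\dagger$.
\end{itemize}
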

\begin{lemma}[Euler's Homogeneous Function Theorem]
    Every degree $n$ homogeneous function $Q$ in variables $Z_i$ satisfies
    \begin{equation*}
        \sum_i Z_i \frac{\partial Q}{\partial Z_i} = n Q.
    \end{equation*}
\end{lemma}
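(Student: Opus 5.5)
We would prove the identity by differentiating the defining scaling relation of homogeneity. The hypothesis that $Q$ is homogeneous of degree $n$ means that $Q(\lambda Z) = \lambda^n Q(Z)$ for all admissible $Z$ and all scalars $\lambda$. In the paper's setting these scalars are $\lambda \in \CC^*$, with $Q$ a holomorphic (in practice polynomial) function of the $Z_i$; for a real homogeneous function one takes $\lambda>0$. We fix $Z$ and regard both sides as functions of the single variable $\lambda$. On the right-hand side the derivative is $n\lambda^{n-1}Q(Z)$. On the left-hand side, the chain rule applied to $\lambda \mapsto Q(\lambda Z_0, \dots, \lambda Z_N)$ gives $\sum_i Z_i \,\frac{\partial Q}{\partial Z_i}(\lambda Z)$, since each argument $\lambda Z_i$ has $\lambda$-derivative $Z_i$. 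Equating the two derivatives and setting $\lambda = 1$ yields $\sum_i Z_i\, \partial Q/\partial Z_i = nQ$, which is the claimed identity.

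The only point requiring care is justifying the chain rule. We need $Q$ to be differentiable (holomorphic in the complex case) on a neighbourhood of the ray through $Z$, so that $\lambda \mapsto Q(\lambda Z)$ is differentiable at $\lambda=1$. In the complex case we also need the derivative with respect to the complex scalar $\lambda$ to be given by the holomorphic partials $\partial/\partial Z_i$ alone; this holds because holomorphy of $Q$ makes the $\partial/\partial \overline{Z}_i$ contributions vanish.

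Because this is the only delicate step, and because all the $Q_i$ in Theorem \ref{thm:closed_form_w_cicy} are polynomials, we would also give a direct check that avoids analysis altogether. Both sides of the identity are linear in $Q$, so it suffices to verify it on a single monomial $Z^\alpha = \prod_i Z_i^{\alpha_i}$ with $\sum_i \alpha_i = n$. For such a monomial, $Z_i\,\partial_i Z^\alpha = \alpha_i Z^\alpha$, and summing over $i$ gives $nZ^\alpha$.

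For the multihomogeneous polynomials of a CICY, the same argument applies separately to each block of coordinates $Z^{(r)}$. We scale only $Z^{(r)}$ and use the degree of $Q_i$ in that block. This gives $\innerprod{Z^{(r)}, \nabla^{(r)} Q_i} = d_{ir} Q_i$, where $d_{ir}$ is the degree of $Q_i$ in the $r$-th projective space. This identity vanishes on $X$, since $Q_i = 0$ there, and that vanishing is the form in which the lemma will be used.
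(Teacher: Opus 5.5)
Your proof is correct. The paper gives no proof of this lemma at all: it quotes Euler's theorem as a classical fact, alongside the Schur complement lemma, so there is no competing route to compare against. Your argument is the standard one and fully adequate for the polynomials $Q_i$ at hand: differentiate $Q(\lambda Z)=\lambda^n Q(Z)$ at $\lambda=1$, with holomorphy ensuring that only the $\partial/\partial Z_i$ terms appear, and back it up with the monomial check $Z_i\,\partial_i Z^\alpha=\alpha_i Z^\alpha$.

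Your last paragraph is more careful than the paper. The lemma is stated for total homogeneity, but in Claim~\ref{claim:2} it is applied block by block to the multihomogeneous $Q_a$. That is exactly your identity, with degree $d_{ar}$ in the $r$-th block, combined with $Q_a=0$ on $X$.

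One notational caution. In the paper, $\innerprod{\cdot,\cdot}$ is the Hermitian product; the expansion in Claim~\ref{claim:2} conjugates one slot. So write the Euler pairing as the unconjugated sum $\sum_{I} Z^{(r)}_I\,\partial Q_i/\partial Z^{(r)}_I$ rather than $\innerprod{Z^{(r)}, \nabla^{(r)} Q_i}$. With conjugation the identity is false: for $Q=Z_0^2+Z_1^2$ one gets $2\norm{Z}^2$ rather than $2Q$.
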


Let $N = \sum n_i=\dim\mathcal{A}$. The metric on $\mathcal{A}$ is block-diagonal and composed of Fubini--Study metrics scaled by the moduli. Thus, $g_\mathrm{FS} = \bigoplus_{r=1}^k t_r g_\mathrm{FS}^{(r)}$.

Without loss of generality, we pick a patch where $Z_0^{(r)}=1$ for all $r$, and denote local coordinates with $z^{(r)}$. In total, there are $N$ coordinates in $z$.

Next, let $D_{a, I} = \partial Q_a / \partial z_I$ be the $m \times N$ Jacobian matrix of the defining polynomials. We partition our coordinates $z$ into $N-m$ independent coordinates $x$ and $m$ dependent coordinates $y$. Similarly, we partition the Jacobian into $D = \begin{pmatrix} D_x & D_y \end{pmatrix}$, where $D_y$ is an invertible $m \times m$ matrix.
    
The inclusion map $\iota\colon X \hookrightarrow \mathcal{A}$ locally takes the form $x \mapsto (x, y(x))$. Taking the total derivative of the constraints $Q_i(x, y(x)) = 0$ yields $D_x + D_y \frac{\partial y}{\partial x} = 0$, implying $\frac{\partial y}{\partial x} = -D_y^{-1} D_x$.

Finally, we can now give the classical definition of pullback Jacobian $J$, an $(N-m) \times N$ matrix such that $\iota^* g_\mathrm{FS} = J g_\mathrm{FS} J^\dag$, with
\begin{equation*}
    J = \begin{pmatrix} I_{N-m} & -D_x^\dag (D_y^\dag)^{-1} \end{pmatrix}.
\end{equation*}
Note that by construction we have $J D^\dag = D_x^\dag - D_x^\dag (D_y^\dag)^{-1} D_y^\dag = 0$.
\begin{lemma}[\textcite{berglund_residue_1995}]
    The top form $\Omega$, as obtained via the Poincar\'e residue formula, takes the form
    \begin{equation*}
        \Omega = \frac{dx_1 \wedge \dots \wedge dx_n}{\det D_y}
    \end{equation*}
    in our chosen coordinate patch. The corresponding flat volume form is proportional to $\Omega \wedge \bar{\Omega}$, which yields:
    \begin{equation*}
        \det g_\mathrm{flat} = \frac{1}{\abs{\det D_y}^2}
    \end{equation*}
\end{lemma}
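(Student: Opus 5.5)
The plan is to derive the formula for $\Omega$ directly from the Poincar\'e residue construction in the chosen patch, and then read off $\det g_\mathrm{flat}$ from the Monge--Amp\`ere relation $\det g_\mathrm{flat} = \kappa\,\Omega\wedge\overline\Omega$ stated in Section \ref{sec:4}.

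\textbf{Step 1: the meromorphic ambient form.} On each factor $\PP^{n_r}$ there is the standard homogeneous form
\begin{equation*}
\mu_r = \sum_{j}(-1)^j Z^{(r)}_j\, dZ^{(r)}_0\wedge\dots\wedge\widehat{dZ^{(r)}_j}\wedge\dots\wedge dZ^{(r)}_{n_r},
\end{equation*}
which has weight $n_r+1$ in $Z^{(r)}$. I form the meromorphic top form
\begin{equation*}
\frac{\mu_1\wedge\dots\wedge\mu_k}{Q_1\cdots Q_m}
\end{equation*}
on $\mathcal{A}$. The Calabi--Yau condition on the configuration matrix, $\sum_i \deg_r Q_i = n_r+1$ for every $r$, makes this form weight zero in each $Z^{(r)}$ separately. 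Hence it descends to a well-defined meromorphic form on $\mathcal{A}$ with simple poles along the hypersurfaces $Q_i=0$.

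In the patch $Z^{(r)}_0=1$ we have $\mu_r = dz^{(r)}_1\wedge\dots\wedge dz^{(r)}_{n_r}$. The form therefore becomes $dz_1\wedge\dots\wedge dz_N/(Q_1\cdots Q_m)$, with $dz=dx\wedge dy$ up to an overall sign fixed by the ordering.

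\textbf{Step 2: change of coordinates and residue.} Near $X$ I replace the coordinates $(x,y)$ by $(x,q)$, where $q_i=Q_i(x,y)$. The Jacobian of $(x,y)\mapsto(x,q)$ is block triangular with lower-right block $D_y$. This block is invertible by assumption, so $(x,q)$ are good local coordinates, and
\begin{equation*}
dx\wedge dy = \frac{dx\wedge dq_1\wedge\dots\wedge dq_m}{\det D_y}.
\end{equation*}
The iterated Poincar\'e residue along $q_1=\dots=q_m=0$ of
\begin{equation*}
\frac{dx\wedge dq_1\wedge\dots\wedge dq_m}{\det D_y\; q_1\cdots q_m}
\end{equation*}
simply strips off the $dq_i/q_i$ factors and restricts to $X$. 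This yields $\Omega = dx_1\wedge\dots\wedge dx_{N-m}/\det D_y$, up to a sign and constant that are irrelevant here.

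\textbf{Step 3: the flat determinant.} In the coordinates $x$ we get
\begin{equation*}
\Omega\wedge\overline\Omega = \abs{\det D_y}^{-2}\, dx\wedge d\bar x .
\end{equation*}
The Monge--Amp\`ere equation then gives $\det g_\mathrm{flat}=\kappa\abs{\det D_y}^{-2}$. The constant $\kappa$ is absorbed into the normalisation convention already adopted for the weights.

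\textbf{Main obstacle.} I expect the delicate point to be Step 1: checking that the ambient form really is globally defined and reduces to exactly $dz$ in the product patch, with no stray powers of $Z^{(r)}_0$. This is where the CY degree condition is used, and I would verify it by tracking the weights factor by factor.

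A secondary issue is independence of the choice of the dependent coordinates $y$. Switching to another choice $y'$ multiplies both $dx$ and $\det D_y$ by the same Jacobian factor, so $\Omega$ is unchanged. I would check this quickly to confirm that the lemma is consistent across the choices used later in the proof.
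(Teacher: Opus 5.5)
Your proposal is correct and takes the same route the paper indicates. $\Omega$ is the Poincar\'e residue of the globally defined form $\mu_1\wedge\dots\wedge\mu_k/(Q_1\cdots Q_m)$ on $\mathcal{A}$, and $\det g_\mathrm{flat}$ then follows from the Monge--Amp\`ere relation $\det g_\mathrm{flat}=\kappa\,\Omega\wedge\overline{\Omega}$, with $\kappa$ absorbed into the normalisation exactly as the paper does. The paper only cites the residue formula rather than deriving it, so your change of variables $(x,y)\mapsto(x,q)$ with Jacobian $\det D_y$ fills in the omitted step. The one thing to add in Step 1 is that descent needs horizontality as well as weight zero; this holds automatically, because each $\mu_r$ is the contraction of $dZ^{(r)}_0\wedge\dots\wedge dZ^{(r)}_{n_r}$ with the Euler vector field of $\PP^{n_r}$.
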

\begin{lemma}
    The inverse of the scaled Fubini-Study metric on a single $\PP^{n_r}$ is given by
    \begin{equation*}
        (t_r g_\mathrm{FS}^{(r)})^{i\bar{j}} = \frac{1}{t_r} \parens*{1+\norm{z^{(r)}}^2}\parens*{\delta^{ij} + z_i^{(r)} \bar{z}_j^{(r)}}
    \end{equation*}
    The determinant of the ambient metric $g_\mathrm{FS}$ is the product of the determinants of the individual blocks, so
    \begin{equation*}
        \det g_\mathrm{FS} = \prod_{r=1}^k \frac{t_r^{n_r}}{\parens*{1+\norm{z^{(r)}}^2}^{n_r+1}} = \prod_{r=1}^k t_r^{n_r} \norm{Z^{(r)}}^{-2(n_r+1)}
    \end{equation*}
\end{lemma}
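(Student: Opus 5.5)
The plan is to treat a single factor $\PP^{n_r}$ first and then assemble the block-diagonal ambient metric. On the patch $Z^{(r)}_0=1$, set $\rho_r = 1+\norm{z^{(r)}}^2$ and let $v$ be the column vector with entries $\bar z^{(r)}_i$. The Fubini--Study formula quoted earlier then reads, as a matrix,
\begin{equation*}
    t_r g_\mathrm{FS}^{(r)} = \frac{t_r}{\rho_r}\parens*{I_{n_r} - \frac{v v^\dag}{\rho_r}}.
\end{equation*}
This is a scalar multiple of a rank-one perturbation of the identity. Both claims of the lemma should therefore follow from the two standard rank-one identities, with no entrywise computation needed.

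For the inverse, I would apply the Sherman--Morrison formula:
\begin{equation*}
    \parens*{I - \frac{v v^\dag}{\rho}}^{-1} = I + \frac{v v^\dag}{\rho - v^\dag v}.
\end{equation*}
The key simplification is that $\rho_r - v^\dag v = 1 + \norm{z^{(r)}}^2 - \norm{z^{(r)}}^2 = 1$, so the inverse is just $I + v v^\dag$. Multiplying by the prefactor $\rho_r/t_r$ gives
\begin{equation*}
    \frac{1}{t_r}\,\rho_r\parens*{\delta^{ij} + z_i^{(r)}\bar z_j^{(r)}}.
\end{equation*}
I would then check directly that $g_{i\bar k}\,g^{k\bar j} = \delta_i^j$. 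Two terms cancel against each other, and the remaining cross term vanishes by the same identity $\rho_r - \norm{z^{(r)}}^2 = 1$. This direct check also pins down which index carries the conjugate in $z_i\bar z_j$ versus $\bar z_i z_j$. I expect that index and conjugation bookkeeping to be the only real obstacle: the algebra itself is routine, but the conjugate must sit consistently with the convention $\iota^* g = J g J^\dag$ used elsewhere in the proof.

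For the determinant, I would apply the matrix determinant lemma, $\det(I - vv^\dag/\rho) = 1 - v^\dag v/\rho = 1/\rho$. Combining this with the scalar prefactor $(t_r/\rho_r)^{n_r}$ gives
\begin{equation*}
    \det\parens*{t_r g_\mathrm{FS}^{(r)}} = \frac{t_r^{n_r}}{\rho_r^{\,n_r+1}}.
\end{equation*}
Since $g_\mathrm{FS} = \bigoplus_r t_r g_\mathrm{FS}^{(r)}$ is block diagonal in the product coordinates $z = (z^{(1)},\dots,z^{(k)})$, its determinant is the product of the block determinants, which is the first displayed formula.

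Finally, on the chosen patch every $Z^{(r)}$ has first entry $Z^{(r)}_0 = 1$, so $\norm{Z^{(r)}}^2 = \rho_r$. Substituting this gives the homogeneous expression $\prod_r t_r^{n_r}\norm{Z^{(r)}}^{-2(n_r+1)}$. I would remark that this equality holds in these patch coordinates (it is a density, not a scalar), which is all that the subsequent ratio computation requires.
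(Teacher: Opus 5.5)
The paper states this lemma without proof, so there is no route to compare against. Your rank-one argument (Sherman--Morrison for the inverse, the matrix determinant lemma for the determinant) is the natural derivation. The determinant half is complete as written. That includes your correct observation that the form $\prod_r t_r^{n_r}\norm{Z^{(r)}}^{-2(n_r+1)}$ is valid only in the patch $Z^{(r)}_0=1$, because $\det g_\mathrm{FS}$ is a density.

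There is a small but real gap in the inverse, at exactly the step you flagged and then did not resolve. Write $G$ for the matrix $G_{ij}=(t_r g^{(r)}_\mathrm{FS})_{i\bar j}$ as you set it up. Since $(vv^\dagger)_{ij}=\bar z_i z_j$, Sherman--Morrison gives $(G^{-1})_{ij}=t_r^{-1}\rho_r(\delta_{ij}+\bar z_i z_j)$, not $z_i\bar z_j$. So ``multiplying by the prefactor'' does not produce the lemma's expression.

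The check you propose does not repair this. The contraction $g_{i\bar k}g^{k\bar j}$ pairs a holomorphic index with an antiholomorphic one. If you carry it out with the lemma's formula, it fails once $n_r\ge 2$: at $z=(1,i)$ the off-diagonal $(1,2)$ entry is $-4i/3$.

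The fix is to use the correct defining relation $g^{i\bar k}g_{j\bar k}=\delta^i_j$. This says $g^{i\bar j}=(G^{-1})_{ji}$, the transpose of the matrix inverse. Transposing your Sherman--Morrison output then gives exactly $t_r^{-1}\rho_r(\delta^{ij}+z_i\bar z_j)$. The verification becomes $\sum_k(\delta_{jk}-\bar z_j z_k/\rho_r)(\delta_{ik}+z_i\bar z_k)=\delta_{ij}$, which holds by $\rho_r-\norm{z^{(r)}}^2=1$.

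This is not cosmetic. This index placement is what makes $\partial_i Q_a\, g^{i\bar j}\,\overline{\partial_j Q_b}$ in Claim~2 produce $\sum_i z_i\,\partial_i Q_a$, so that Euler's identity applies. The raw matrix entries would give $\sum_i \bar z_i\,\partial_i Q_a$, to which Euler does not apply.

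It also answers your concern about the convention $\iota^*g=JgJ^\dagger$. The paper's $J=(I,\,-D_x^\dagger(D_y^\dagger)^{-1})$ is the complex conjugate of the true Jacobian $\partial z/\partial x$. Hence the matrix called $g_\mathrm{FS}$ in Claim~1 must be read as $G^{T}$, since $JG^{T}J^\dagger=(\iota^*g_\mathrm{FS})^{T}$ has the correct determinant. Its inverse $(G^{T})^{-1}$ is precisely the lemma's $g^{i\bar j}$, so everything is consistent once the transpose is made explicit.
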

Now, while computing $\det g_\mathrm{FS}$ was easy, the main difficulty lies in computing $\det \iota^* g_\mathrm{FS}$.
\begin{claim}\label{claim:1}
    $\det \iota^* g_\mathrm{FS} = \det g_\mathrm{FS} \det\parens*{D g_\mathrm{FS}^{-1} D^\dag} / \abs{\det D_y}^2$
\end{claim}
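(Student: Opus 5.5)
Write $G = g_\mathrm{FS}$, the $N\times N$ block-diagonal ambient metric. The plan is to stack the pullback Jacobian $J$ and the constraint Jacobian $D$ into one square matrix, compute the determinant of a related Gram-type matrix in two different ways, and compare. Let
\begin{equation*}
    P = \begin{pmatrix} J \\ D\,G^{-1} \end{pmatrix},
\end{equation*}
an $N\times N$ matrix, with $J$ of size $(N-m)\times N$ and $D G^{-1}$ of size $m\times N$.

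\textbf{Step 1 (block form).} I will compute $P\,G\,P^\dag$ blockwise. Since $G$ is Hermitian, $(DG^{-1})^\dag = G^{-1}D^\dag$. This gives
\begin{equation*}
    P\,G\,P^\dag = \begin{pmatrix} J G J^\dag & J D^\dag \\ D J^\dag & D G^{-1} D^\dag \end{pmatrix}.
\end{equation*}
The identity $J D^\dag = 0$ noted above kills both off-diagonal blocks. Hence
\begin{equation*}
    \det(P G P^\dag) = \det(\iota^* g_\mathrm{FS})\,\det(D G^{-1} D^\dag).
\end{equation*}
This is exactly where the Schur complement lemma enters, in the trivial case $B=C=0$.

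\textbf{Step 2 (computing $\det P$).} The left-hand side also equals $\abs{\det P}^2 \det G$, so it remains to find $\det P$. The convenient route is to note that
\begin{equation*}
    P\,G = \begin{pmatrix} JG \\ D \end{pmatrix},
\end{equation*}
so it suffices to compute $\det\begin{pmatrix} JG \\ D\end{pmatrix}$ and then divide by $\det G$.

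Alternatively, one can avoid $G$ entirely with a cleaner square matrix, $Q = \begin{pmatrix} J \\ D \end{pmatrix}$. Using $J = (I \;\; -D_x^\dag (D_y^\dag)^{-1})$ and $D = (D_x \;\; D_y)$, the Schur complement with respect to the identity block gives
\begin{equation*}
    \det Q = \det\bigl(D_y + D_x D_x^\dag (D_y^\dag)^{-1}\bigr).
\end{equation*}
This is not obviously nice, so I will instead pick the factorisation that makes the bookkeeping clean. The idea is to replace the row block $J$ by the rows $(I\;\;0)$, which span a complement of $\ker$-type directions in the right way.

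Concretely, the argument rests on a Gram-determinant identity. For an $N\times N$ positive Hermitian $G$, a full-rank $m\times N$ matrix $D$, and $J$ whose rows span $\ker D$ (as the conjugate statement $JD^\dag=0$ says), we have
\begin{equation*}
    \det(J G J^\dag)\,\det(D G^{-1} D^\dag) = \det G \cdot \abs{\det Q_0}^2,
\end{equation*}
where $Q_0 = \begin{pmatrix} J \\ D \end{pmatrix}$ is normalised appropriately. I will establish this by comparing with the case $G=I$ via the substitution $J \to J G^{1/2}$ and $D \to D G^{-1/2}$, which preserves $JD^\dag=0$. This reduces everything to the orthogonal case, where
\begin{equation*}
    \det(JJ^\dag)\det(DD^\dag) = \abs{\det Q_0}^2,
\end{equation*}
because the rows of $J$ and of $D$ span orthogonal complementary subspaces.

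\textbf{Step 3 (the determinant of $Q_0$).} It then remains to show $\abs{\det Q_0} = \abs{\det D_y}^{-1}$ up to the chosen normalisation, which I expect is the main obstacle. The trick is to use the triangular structure directly. The product
\begin{equation*}
    \begin{pmatrix} J \\ D\end{pmatrix}\begin{pmatrix} I & 0 \\ -D_y^{-1}D_x & D_y^{-1}\end{pmatrix}
\end{equation*}
has block $(2,1)$ equal to $D_x - D_y D_y^{-1} D_x = 0$ and block $(2,2)$ equal to $I$. It is therefore block upper triangular with diagonal blocks $I + D_x^\dag (D_y^\dag)^{-1} D_y^{-1} D_x$ and $I$.

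That diagonal block is not the identity, so I would instead exploit the orthogonal decomposition and compute
\begin{equation*}
    \det(JJ^\dag) = \det\bigl(I + K K^\dag\bigr), \qquad K = D_x^\dag (D_y^\dag)^{-1},
\end{equation*}
together with $\det(DD^\dag)$. Factoring out $D_y$ gives $D = D_y\,(D_y^{-1}D_x \;\; I)$, and $K^\dag = D_y^{-1}D_x$, so
\begin{equation*}
    \det(DD^\dag) = \abs{\det D_y}^2 \det(I + K^\dag K).
\end{equation*}
Sylvester's identity, $\det(I+KK^\dag) = \det(I+K^\dag K)$, then yields
\begin{equation*}
    \det(JJ^\dag) = \det(DD^\dag)/\abs{\det D_y}^2.
\end{equation*}

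\textbf{Step 4 (assembling).} For general $G$, I will run the same computation with $G$ inserted, using block Schur complements on $G$ partitioned into $x$ and $y$ blocks. The goal is the identity
\begin{equation*}
    \det(JGJ^\dag)\,\abs{\det D_y}^2 = \det G \cdot \det(D G^{-1} D^\dag).
\end{equation*}
Both sides are determinants of the two complementary "projections" of $G$: one restricted to $\ker D$, the other to the dual of its complement. Their product equals $\det G$ times the squared volume factor of the splitting, which is $\abs{\det D_y}^{-2}$ from Step 3. Rearranging this identity gives Claim~\ref{claim:1}.

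The delicate point is tracking that conjugate-volume factor correctly when $G \neq I$. I would handle it by the $G^{1/2}$ change of variables, checking that the splitting factor is $G$-independent because it is a ratio of determinants of the same row spaces.
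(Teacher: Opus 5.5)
There is a genuine gap. Your Step~1 is correct, and it is a clean alternative to the paper's bordered matrix $\begin{pmatrix} g_\mathrm{FS} & D^\dag \\ D & 0 \end{pmatrix}$. Your Step~3 is also correct, but it only proves the claim for $G=I$.

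The gap is the bridge between these two steps. The ``Gram-determinant identity'' $\det(JGJ^\dag)\det(DG^{-1}D^\dag)=\det G\,\abs{\det Q_0}^2$ with $Q_0=\begin{pmatrix} J\\ D\end{pmatrix}$ is false. So is the Step~4 assertion that this product equals $\det G\,\abs{\det D_y}^{-2}$. Under $G\mapsto\lambda G$ the product scales like $\lambda^{N-2m}$, while $\det G$ scales like $\lambda^{N}$. So no $G$-independent factor, however $Q_0$ is ``normalised'', can connect them. A concrete counterexample: take $N=2$, $m=1$, $G=\mathrm{diag}(2,1)$, $D=(1\;\;1)$, $J=(1\;\;{-1})$. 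The product is $9/2$, whereas $\det G\,\abs{\det Q_0}^2=8$ and $\det G\,\abs{\det D_y}^{-2}=2$.

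The claim is an identity for the \emph{ratio} $\det(JGJ^\dag)/\det(DG^{-1}D^\dag)$, and a product identity cannot be rearranged into it. The $G^{1/2}$ substitution does not rescue this. In the orthogonal case you obtain $\abs{\det Q'}^2$ with $Q'=\begin{pmatrix} JG^{1/2}\\ DG^{-1/2}\end{pmatrix}=PG^{1/2}$, which is just Step~1 again. Because the two row blocks are multiplied by different matrices, this factor genuinely depends on $G$ and is not $\det G\,\abs{\det Q_0}^2$. Your own triangularisation in Step~3 also shows $\det Q_0=\det D_y\,\det(I+KK^\dag)$, not $\abs{\det D_y}^{-1}$.

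The missing ingredient is simply to compute $\det P$, which you set out to do in Step~2 and then abandoned. Right-multiply $P$ by the unit-determinant matrix $\begin{pmatrix} I_{N-m} & K\\ 0 & I_m\end{pmatrix}$ with $K=D_x^\dag(D_y^\dag)^{-1}$; this is the inverse of the paper's $\begin{pmatrix} J\\ E\end{pmatrix}$.
\begin{itemize}
\item The $J$-rows become $\begin{pmatrix} I & 0\end{pmatrix}$.
\item Since $\begin{pmatrix} K\\ I\end{pmatrix}=D^\dag(D_y^\dag)^{-1}$, the lower-right block becomes $DG^{-1}D^\dag(D_y^\dag)^{-1}$.
\item Hence $\det P=\det(DG^{-1}D^\dag)/\overline{\det D_y}$.
\end{itemize}
Step~1 then reads $\det(JGJ^\dag)\det(DG^{-1}D^\dag)=\det G\,\det(DG^{-1}D^\dag)^2/\abs{\det D_y}^2$. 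Dividing by the positive number $\det(DG^{-1}D^\dag)$ gives the claim, with no need for Sylvester's identity or the reduction to $G=I$.

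Alternatively, you could keep Step~3 and prove Jacobi's complementary-minor identity in its correct ratio form, $\det(JGJ^\dag)/\det(JJ^\dag)=\det G\,\det(DG^{-1}D^\dag)/\det(DD^\dag)$.

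Once repaired, your route replaces the paper's two Schur-complement evaluations of the bordered determinant with a single block-diagonal Gram matrix. It still relies on the same unit-triangular change of basis.
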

\begin{proof}
    Let
    \begin{equation*}
        M = \begin{pmatrix} g_\mathrm{FS} & D^\dag \\ D & 0 \end{pmatrix},
    \end{equation*}
    be an $(N+m)\times(N+m)$ matrix, for reasons that will soon be apparent. Using Schur complement we compute
    \begin{equation}
    \label{eq:schur1_kahler}
        \det M = \det g_\mathrm{FS} \det\parens*{-D g_\mathrm{FS}^{-1} D^\dag} = (-1)^m \det g_\mathrm{FS} \det\parens*{D g_\mathrm{FS}^{-1} D^\dag}.
    \end{equation}
    Next, we apply a basis transformation to $M$ using 
    \begin{equation*}
        S = \begin{pmatrix} I_{N-m} & -D_x^\dag (D_y^\dag)^{-1} & 0 \\ 0 & I_m & 0 \\ 0 & 0 & I_m \end{pmatrix},
    \end{equation*}
    to obtain $\tilde{M} = S M S^\dag$. Note that $S$ is upper-block triangular, so we immediately have $\det S = \det I_{N-m} (\det I_m)^2 = 1$, and therefore $\det M = \det \tilde{M}$. Furthermore, note that the top row of $S$ is identical to $J$. By introducing $E = \begin{pmatrix} 0 & I_m \end{pmatrix}$ from the second row of $S$, and explicitly computing $\tilde{M}$, we find
    \begin{equation*}
        \tilde{M} = \begin{pmatrix}
            J g_\mathrm{FS} J^\dag & J g_\mathrm{FS} E^\dag & J D^\dag \\
            E g_\mathrm{FS} J^\dag & E g_\mathrm{FS} E^\dag & E D^\dag \\
            D J^\dag & D E^\dag & 0
        \end{pmatrix} = \begin{pmatrix} 
            J g_\mathrm{FS} J^\dag & J g_\mathrm{FS} E^\dag & 0 \\
            E g_\mathrm{FS} J^\dag & E g_\mathrm{FS} E^\dag & D_y^\dag \\
            0 & D_y & 0
        \end{pmatrix},
    \end{equation*}
    where we used $JD^\dag=0$, $DE^\dag=D_y$ to simplify the expression. Letting
    \begin{align*}
    K &= \begin{pmatrix} 
        E g_\mathrm{FS} E^\dag & D_y^\dag \\
        D_y & 0
    \end{pmatrix} \\
    L &= \begin{pmatrix} 
        J g_\mathrm{FS} E^\dag & 0
    \end{pmatrix},
    \end{align*}
    be the components of $\tilde{M}$, we find 
    \begin{equation*}
        K^{-1} = \begin{pmatrix}
            0 & D_y^{-1} \\
            \parens*{D_y^\dag}^{-1} & -\parens*{D_y^\dag}^{-1}E g_\mathrm{FS}E^\dag D_y^{-1}
        \end{pmatrix}
    \end{equation*}
    because all the blocks are invertible. Since the top left block of $K$ is zero, we immediately have $LKL^\dag=0$. Thus, we are able to compute $\det \tilde{M}$ by taking the Schur complement of $K$ as
    \begin{equation}
    \begin{aligned}
    \label{eq:schur2_kahler}
        \det \tilde{M} &= \det(K) \det\parens*{J g_\mathrm{FS} J^\dag - L^\dag K^{-1}L}\\ 
        &= \det(K) \det\parens*{J g_\mathrm{FS} J^\dag} = \parens*{(-1)^m \abs{\det D_y}^2} \det \iota^* g_\mathrm{FS}.
    \end{aligned}
    \end{equation}
    Equating \eqref{eq:schur1_kahler} and \eqref{eq:schur2_kahler} completes the proof.
\end{proof}

\begin{claim}\label{claim:2}
    $\parens*{D g_\mathrm{FS}^{-1} D^\dag}_{ab} = \sum_{r=1}^k 1/t_r \norm{Z^{(r)}}^2 \innerprod*{ \nabla^{(r)} Q_a, \nabla^{(r)} Q_b}$
\end{claim}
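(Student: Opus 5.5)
Since $g_\mathrm{FS}=\bigoplus_r t_r g_\mathrm{FS}^{(r)}$ is block diagonal, its inverse is block diagonal with blocks $(t_r g^{(r)}_\mathrm{FS})^{-1}$. Splitting the Jacobian $D$ into column blocks $D^{(r)}$ by projective factor, the matrix product decomposes as
\begin{equation*}
    \parens*{D g_\mathrm{FS}^{-1} D^\dag}_{ab} = \sum_{r=1}^k \sum_{i,j=1}^{n_r} \partial_{z^{(r)}_i} Q_a \,(t_r g_\mathrm{FS}^{(r)})^{i\bar j}\, \overline{\partial_{z^{(r)}_j} Q_b}.
\end{equation*}
It therefore suffices to prove, for a single factor $\PP^{n}$ with affine coordinates $z$ ($Z_0=1$) and homogeneous $Q_a,Q_b$, that
\begin{equation*}
    \sum_{i,j=1}^{n}\partial_i Q_a\,(1+\norm{z}^2)(\delta^{ij}+z_i\bar z_j)\,\overline{\partial_j Q_b} = \norm{Z}^2\innerprod*{\nabla Q_a,\nabla Q_b},
\end{equation*}
where on the left $Q$ is dehomogenized and on the right $\nabla$ is the full homogeneous gradient in $Z_0,\dots,Z_n$, evaluated at $Z=(1,z)$. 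The factor $1/t_r$ then comes directly from the inverse-metric lemma.

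To verify this, expand the left side. It equals $(1+\norm{z}^2)$ times
\[
\sum_i \partial_i Q_a\overline{\partial_i Q_b} + \Bigl(\sum_i z_i\partial_i Q_a\Bigr)\overline{\Bigl(\sum_j z_j\partial_j Q_b\Bigr)}.
\]
The affine derivatives $\partial_{z_i}Q$ agree with $\partial Q/\partial Z_i$ at $Z_0=1$ for $i\ge1$. The first sum is thus the $i\ge 1$ part of $\innerprod{\nabla Q_a,\nabla Q_b}$. For the second term, apply Euler's Homogeneous Function Theorem in the $r$-th set of variables: $\sum_{i\ge1} Z_i\partial_{Z_i}Q_a = d_a Q_a - Z_0\partial_{Z_0}Q_a$, with $d_a$ the degree of $Q_a$ in $Z^{(r)}$. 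On $X$ we have $Q_a=0$ and $Z_0=1$, so $\sum_i z_i\partial_i Q_a = -\partial_{Z_0}Q_a$. The second term then becomes $\partial_{Z_0}Q_a\overline{\partial_{Z_0}Q_b}$, which is exactly the missing $i=0$ component. The bracket collapses to $\innerprod{\nabla Q_a,\nabla Q_b}$, and $1+\norm{z}^2=\norm{Z}^2$ in this patch.

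The main subtlety is the Euler step. The claim holds only on $X$, where $Q_a=0$, so this must be stated explicitly. For polynomials of degree $0$ in a given factor, Euler's identity gives zero directly. One also has to check that the conjugation and index conventions of the inverse-metric lemma ($z_i\bar z_j$ paired with $\partial_i Q_a$ and $\overline{\partial_j Q_b}$) match the ordering in $D g^{-1}D^\dag$; if they are transposed, the argument is symmetric and works the same way.

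Finally, both sides are homogeneous of the same bidegree under $Z^{(r)}\mapsto\lambda Z^{(r)}$. The identity therefore holds in any scaling, not only at $Z_0^{(r)}=1$, which makes the formula manifestly global.
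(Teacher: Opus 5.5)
Your proof is correct and follows the paper's argument essentially verbatim: you expand $Dg_\mathrm{FS}^{-1}D^\dag$ blockwise via the inverse-metric lemma, then use Euler's identity with $Q_a=0$ on $X$ to turn $\sum_i z_i\,\partial Q_a/\partial z_i$ into $-\partial Q_a/\partial Z_0^{(r)}$, which supplies the missing $Z_0^{(r)}$-component of $\innerprod{\nabla^{(r)}Q_a,\nabla^{(r)}Q_b}$. Your side remark that a transposed pairing ``works the same way'' is wrong, since with $\bar z_i z_j$ in place of $z_i\bar z_j$ the cross term would involve $\sum_i \bar z_i\,\partial Q_a/\partial z_i$, which Euler does not simplify; the pairing you used is the one forced by $\iota^*g_\mathrm{FS}=Jg_\mathrm{FS}J^\dag$ with $JD^\dag=0$ in Claim~\ref{claim:1}, i.e.\ the natural inner product $g^{i\bar j}\partial_iQ_a\overline{\partial_jQ_b}$ of the $(1,0)$-forms $\partial Q_a$ and $\partial Q_b$.
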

\begin{proof}
    Directly expanding the $(a,b)$-th entry of $D g_\mathrm{FS}^{-1} D^\dag$ we obtain
    \begin{align*}
        \parens*{D g_\mathrm{FS}^{-1} D^\dag}_{ab} &= \sum_{r=1}^k \sum_{i,j=1}^{n_r} \frac{\partial Q_a}{\partial z_i^{(r)}} (t_r g_\mathrm{FS}^{(r)})^{i\bar{j}} \overline{ \frac{\partial Q_b}{\partial z_j^{(r)}} } \\
        &= \sum_{r=1}^k \frac{1}{t_r}\parens*{1+\norm*{z^{(r)}}^2} \parens*{ \sum_{i=1}^{n_r} \frac{\partial Q_a}{\partial z_i^{(r)}} \overline{\frac{\partial Q_b}{\partial z_i^{(r)}}} + X},
    \end{align*}
    where
    \begin{equation*}
        X = \parens*{ \sum_{i=1}^{n_r} z_i^{(r)} \frac{\partial Q_a}{\partial z_i^{(r)}} } \overline{\parens*{ \sum_{j=1}^{n_r} z_j^{(r)} \frac{\partial Q_b}{\partial z_j^{(r)}} }}.
    \end{equation*}
    By Euler, since $Q_a$ is both homogeneous and vanishing, we have $\sum_{I=0}^{n_r} Z_I^{(r)} \frac{\partial Q_a}{\partial Z_I^{(r)}} = 0$. Since we are working in the patch $Z_0^{(r)} = 1$, we divide by $Z_0^{(r)}$ to obtain
    \begin{equation*}
        \sum_{i=1}^{n_r} z_i^{(r)} \frac{\partial Q_a}{\partial z_i^{(r)}} = - \frac{\partial Q_a}{\partial Z_0^{(r)}}.
    \end{equation*}
    Thus, we find that $X = \partial Q_a / \partial Z_0^{(r)} \overline{\partial Q_b / \partial Z_0^{(r)}}$. Hence, we naturally restored the standard Euclidean dot product over all $n_r+1$ homogeneous coordinates, as desired.
\end{proof}

Finally, we can put everything together. Taking the ratio of the volume forms and utilizing claim \ref{claim:1} we obtain
    \begin{equation*}
        \frac{\det g_{\text{flat}}}{\det \iota^* g_\mathrm{FS}} = \frac{\frac{1}{\abs{\det D_y}^2}}{\frac{\det g_\mathrm{FS} \det\parens*{D g_\mathrm{FS}^{-1} D^\dag}}{\abs{\det D_y}^2}} = \frac{1}{\det g_\mathrm{FS} \det\parens*{D g_\mathrm{FS}^{-1} D^\dag}}
    \end{equation*}
    Expanding the determinants with claim \ref{claim:2} completes the proof of the theorem.

\section{Learning Ricci-flat Metrics with GNNs}\label{sec:5}

Having fixed the Calabi--Yau manifold with a specific symmetry group and sampled the points, we proceed to learn the Ricci-flat metric in \emph{cymyc} \cite{berglund_cymyc_2024}.

There are different architectures to choose from, and each encodes different geometric properties and constraints of the flat metric into the neural network. We have already seen the PhiModel \cite{larfors_learning_2021} for example, which parameterises the flat metric using a single scalar function $\phi$ instead of learning metric components individually, with $g_\mathrm{flat} \approx \iota^* g_\mathrm{ref} + \partial\overline{\partial}\phi_\mathrm{NN}$. This ensures that the learnt metric is K\"ahler by construction. A variation involves adding the spectral layer \cite{berglund_machine_2023}, where the \emph{spectral features} $Z_i \overline{Z_j} / \norm{Z}^2$ are used as neural network inputs instead of homogeneous coordinates $Z_i$ themselves. The purpose of these features is to make the network invariant to the $\CC^*$ projective symmetry, and thus make $\phi$ well defined on the projective ambient space. In fact, it is known that these features are expressive enough for the network to learn any $\CC^*$-symmetric function \cite{villar_scalars_2021,weyl_classical_1946}.

A machine learning discipline that studies symmetry invariant or equivariant neural networks is Geometric Deep Learning \cite{bronstein_geometric_2021}. Within this framework, Graph Neural Networks (GNNs) are used to introduce permutation symmetry into the architecture, making the network invariant to different ordering of nodes and edges in the input graph. Various extensions to different symmetry groups and manifolds also exist. Despite the shared focus on symmetry and representation, applications of GNNs to metric learning are surprisingly under-explored. 

One way to incorporate GNNs could be to discretise the manifold into a graph and treat points as nodes. Then, sheaf networks could be used to represent different differential forms on the nodes, one of them being the learnt metric. However, it is unclear if this approach can properly capture properties like holonomy of the manifold, so it might not be sufficiently expressive for the task.

The approach that we adopt is to apply the GNN pointwise on a synthetic graph in which nodes represent ambient coordinates. Then the graph output is aggregated to produce a scalar value that represents $\phi$ in the PhiModel. The purpose of the GNN here is to capture the permutation symmetry that is common on Calabi--Yau manifolds. Furthermore, this approach is compatible with spectral networks, since we can treat $Z_i \overline{Z_j} / \norm{Z}^2$ as input features on graph edges.

We test this architecture on the Fermat quintic, using a dataset of $10^5$ points. To make the comparison fair, we use spectral networks and GNNs with roughly the same number of parameters (around $8$ thousand), and we use the same hyperparameters in both cases. 

\begin{figure}[ht]
    \centering
    \includegraphics{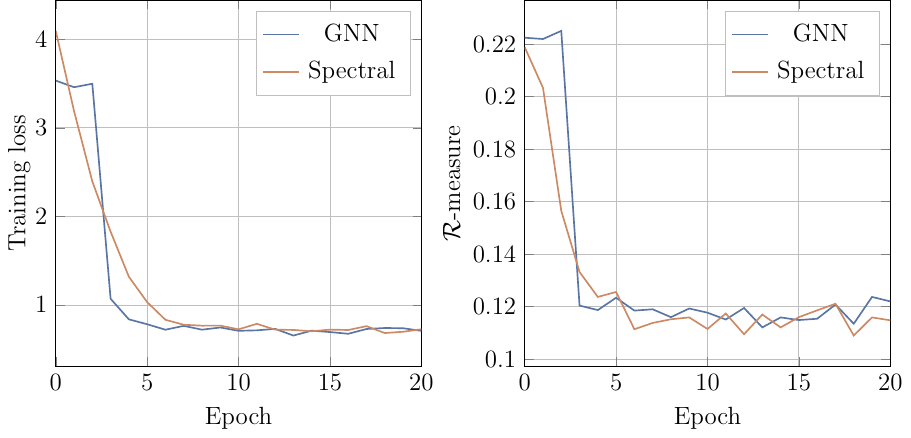}
    \caption{Evolution of losses when learning the Ricci-flat metric on the Fermat quintic.}
    \label{fig:losses}
\end{figure}

Figure \ref{fig:losses} shows that training loss exhibits a much sharper transition from high loss to low loss regime when GNNs are used. This is likely a consequence of the model being intrinsically symmetric and thus able to learn the metric directly, without needing to also learn the symmetries.

Another approach to introducing symmetries into the metric by construction is by projecting points to fundamental domains of the symmetry group \cite{hendi_learning_2024}. However, this approach exhibited an unexpected behaviour where expanding the symmetry group \emph{increased} some loss measures like $\mathcal{R}$ measure, even though the training loss kept improving. This unexpected behaviour implies that including more symmetries makes the loss landscape more complicated. Practically, this means that there exist \emph{pathological} metrics that appear near-flat according to some measures, while remaining highly curved according to others.

Since the ground truth is unavailable, one must exercise care when designing the architecture and the training pipeline, and flatness should be evaluated with multiple different measures. 

On the other hand, some loss measures can be not just uninformative, but even counter-productive. For example, the Euler characteristic $\chi$ can be both computed topologically, and estimated by integrating the metric. However, any metric should yield the correct answer, so deviations are not a measure of curvature but are a Monte-Carlo error. Similarly, any deviations in volume with respect to the flat metric are also just numerical noise.

In practice, training using the $\sigma$-measure \cite{larfors_learning_2021} seems to work well enough, but one still needs to evaluate some curvature-based metrics like $\norm{\Ric g}$ or the $\mathcal{R}$-measure.

In summary, successful numerical estimation of Ricci-flat metrics greatly benefits from systematically incorporating symmetries into the machine learning pipeline. By deriving the isometry group from first principles, we identified the exact symmetries that can be safely used during training. We also introduced a novel formula for volume ratios on complete intersection Calabi--Yau manifolds, and demonstrated how point sampling can subtly break the symmetries in the dataset. By introducing Graph Neural Networks (GNNs), we captured the symmetries to achieve a sharp phase transition to a low-loss regime while avoiding pathological behaviours.

This has impact on downstream applications of Ricci-flat metrics. For example, symbolic approximations to the Ricci-flat potentials \cite{mirjanic_symbolic_2025,constantin_calabi-yau_2026,lee_approximate_2025} can benefit from the symmetry constraints by enforcing them into their formulae.

Furthermore, having more informed approximations can help us understand how the Ricci-flat metrics transform under manifold deformations. This can lead to better understanding of the string theory landscape. For instance, Reid's fantasy \cite{reid_moduli_1987} famously conjectures that all Calabi--Yau threefolds are connected by extremal transitions, and recent progress heavily supports this \cite{anderson_elliptic_2026}. Thus, AI for metric learning is not merely a tool for localized physical predictions, but a vital instrument for testing global landscape conjectures.

\printbibliography

\end{document}